\begin{document}

\mainmatter  

\title{Towards Approximate Model Checking DC and PDC Specifications}

\titlerunning{Towards Approximate Model Checking DC and PDC Specifications}

%
%
\author{Changil Choe$^1$
\and Dang Van Hung$^2$ \and Song Han$^3$}
\authorrunning{Changil Choe, Dang Van Hung,  and Song Han}

\institute{$^{1,3}$Faculty of Mathematics, Kim Il Sung University, D.P.R.K\\
\mailsa\\
$^2$Faculty of Information Technology, Vietnam National University, Vietnam\\
\mailsb
}

%
%

\toctitle{Lecture Notes in Computer Science}
\tocauthor{Authors' Instructions}
\maketitle

\begin{abstract}
DC has proved to be a promising tool for the specification and verification of functional requirements on the design of hard real-time systems. Many works were devoted to develop effective techniques for checking the models of hard real-time systems against DC specifications. DC model checking theory is still evolving and yet there is no available tools supporting practical verifications due to the high undecidability of calculus and the great complexity of model checking. Present situation of PDC model checking is much worse than the one of DC model checking. In view of the results so far achieved, it is desirable to develop approximate model checking techniques for DC and PDC specifications. This work was motivated to develop approximate techniques checking automata models of hard real-time systems for DC and PDC specifications. Unlike previous works which only deal with decidable formulas, we want to develop approximate techniques covering whole DC and PDC formulas. The first results of our work, namely, approximate techniques checking real-time automata models of systems for LDI and PLDI specifications, are described in this paper. 
\end{abstract}

\section{Introduction}
Functional requirements and dependability requirements are two kinds of top-level requirements on the design of computing systems which include software embedded hard real-time systems. The functional requirements express what a system must be able to do and what it must not do. The dependability requirements express that the probability for undesirable but unavoidable behavior of a system must be below a certain limit. 

Duration Calculus (abbreviated to DC) was introduced in [1]  as a logic for specifying quantitative timing requirements of hard real-time systems and fully analyzed in [2, 3]. DC has strong expressive power specifiable hard real-time requirements of systems, but its formulas are highly undecidable [4]. Linear duration invariants (abbreviated to LDIs), a decidable subclass of DC formulas, is useful to specify constraints on the durations of states in the systems [5]. A major interest of researchers in DC model checking was to develop effective technique checking timed automata against LDIs and many works were devoted to deal with it [6, 7, 8, 9, 10]. But the algorithms developed so far need complicated preprocessing and huge amounts of computation as well as do not support debugging effectively, although they allow complete verifications in theoretical terms. 

Several researchers defined variants of DC and proposed techniques checking timed state sequence models against some decidable fragments of their calculi [11, 12, 13, 14, 15]. But the complexity of model checking any decidable fragment featuring both negation and chop, DC's only moldality, is non-elementary and thus impractical [15]. Even worse, when such decidable fragments are generalized just slightly to cover more interesting durational constraints the resulting fragments become undecidable [15]. DC model checking theory is still not completed satisfactorily to meet the basic standards for practical application.   

Naturally probabilistic extension of DC, which is called PDC, was studied to specify and verify dependability requirements of hard real-time systems [16, 17]. Some researchers tried to develop a technique checking probabilistic timed automata against so-called probabilistic linear duration invariants (abbreviated to PLDIs) of their calculus called PDC in their paper [18]. Their study did not show considerable results from the complexity point of view, as they noted in their paper. 

DC and PDC which deal with good models and specifications of real-time systems  will be more useful in the design of hard real-time systems if the effective model checking techniques would be available. To the best of our knowledge, very few research results showing the applications of DC model checking in practice were reported until now. 

This work was motivated to develop approximate model checking tools for the verification of automata models of real-time systems against DC and PDC specifications. Approximate model checking is achieved by generating a large number of random paths through the model, evaluating each path for given property, and using the resulting information to generate approximately correct result. Approximate model checking gives the possibility of handling the difficult problems faced in DC and PDC model checking,  such as huge amount of computation and weak debugging capability, as well as gives the possibility of applying undecidable formulas in system verifications. We think that approximate model checking can be a better way to use DC and PDC in the verifications of hard real-time systems than normal model checking, because DC and PDC have strong expressive powers, but they are highly undecidable and the cost of model checking is too high.  

In this paper, we describe our first result by concentrating on showing main idea and its advantage through simple but typical cases of DC and PDC model checking. The rest of the paper is organized as follows. In the next section, we present an approximate technique checking real-time automata against LDIs using genetic algorithm. In section 3, we present an approximate technique checking probabilistic real-time automata against PLDIs, which is based on the technique of section 2. In section 4, we explain about future work.

\section{Approximate Technique Checking Real-time Automata for LDIs }

In this section, we present a technique checking real-time automata against LDIs approximately, based on the genetic algorithm. Genetic algorithm is a good approach to search near-optimal solution, when the problem is so complicated that seeking optimal solution is practically impossible. For our purpose, we define satisfaction relation between a real-time automaton and a LDI slightly differently from other papers, but equivalently in essence. Then, we develop a technique checking real-time automata for LDIs approximately and give an example showing the effectiveness of our technique. At the end of the section we give a remark on our technique.

\subsection{Satisfaction relation between a real-time automaton and a LDI}

\begin{definition}
A real-time automaton  $\mathcal{M}$ is a triple  $\mathcal{M}=(S, T, L)$ consisting of  a finite set   $S$ of states,  a transition relation   $T\subseteq S\times I \times S$
, and  a labeling function $L: S\rightarrow 2^{AP}$assigning a set of atomic propositions to each state $s\in S$.
\end{definition}

Here,  $I$ is the set of closed interval  $[a,b]$   or semi-infinite interval $[a,\infty )$  on  $\mathbb{R}^+$. For the convenience, we simply denote these intervals by  $[*,*]$. Every state of a real-time automaton is both initial state and accepting state.   $AP$ is the set of atomic propositions which is differently decided according to the system. A real-time automaton has one clock which is reset by every transition.

\begin{example}
Gas burner is a device to generate a flame to heat up products using a gaseous fuel. If the flame fails to be on with gas valve is opened, gas leaks. Sensor should detect gas leak and close the gas valve within one second. Then gas valve should not be open within 30 seconds to protect accumulation of gas leakage. Gas may leak again without flame being on at any time after valve is open. The left of Fig.~\ref{fig1} shows real-time automaton model of gas burner. Leak and NLeak are used to denote atoms of gas burner. 
\end{example}

\begin{figure}
\centering
\includegraphics[height=3.4cm]{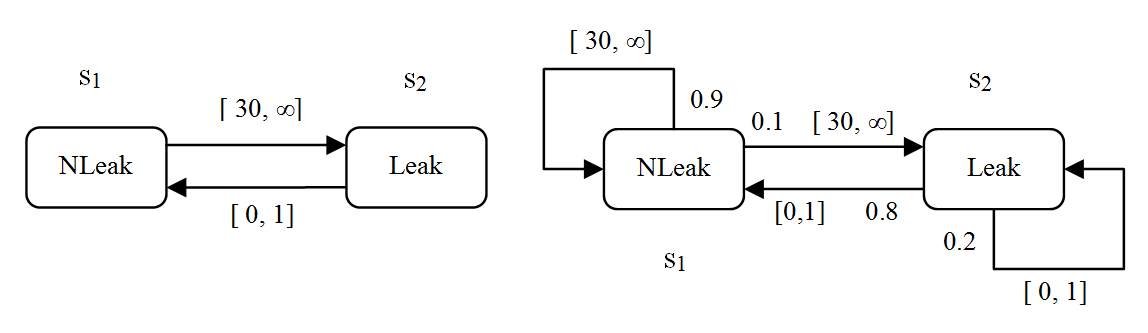}
\caption{Left: Real-time automaton model of gas burner. Right: Probabilistic real-time automaton model of gas burner.}
\label{fig1}
\end{figure}

For a transition  $\rho=(s,[a,b],s')$ of $\mathcal{M}$, notations   $\overleftarrow{\rho}=s$ and   $\overrightarrow{\rho}=s'$ are used.   $ \rho_1\rho_2...\rho_m$ is called a $\emph{sequence}$ and  $ (\rho_1,t_1)(\rho_2,t_2)...(\rho_m,t_m)$ is called a $\emph{time-stamped sequence}$, where $\rho_i=(s_i,[a_i,b_i],s_i^{'})$ and $t_i \in [a_i,b_i]$  for all $i(i\leq i \leq m)$.   $Seq$ and  $TSeq$ are used to denote sequence and time-stamped sequence respectively. 
If a sequence    $ \rho_1\rho_2...\rho_m$ satisfies  $\overrightarrow{\rho_i} = \overleftarrow{\rho}_{i+1}$ for all
$i\, (1\leq i < m)$, it is called a behavior and denoted by $Beh=\rho_1\rho_2...\rho_m$. If a time-stamped sequence  $ (\rho_1,t_1)(\rho_2,t_2)...(\rho_m,t_m)$ satisfies   $\overrightarrow{\rho_i} = \overleftarrow{\rho}_{i+1}$  for all
$i\, (1\leq i < m)$, it is called a time-stamped behavior and denoted by  $TBeh=(\rho_1,t_1)(\rho_2,t_2)...(\rho_m,t_m)$.

\begin{definition}
A DC formula of the form $A\leq \ell \leq B \rightarrow \sum_{i=1}^n c_i \cdot \int P_i \leq C$
is called a linear duration invariant.  
\end{definition}

Here, each  $P_i(1\leq i \leq n)$ is an atomic proposition, $A$ and $B$  are nonnegative real numbers, $B$  could be $\infty$, $c_i (1 \leq i \leq n)$ and $C$  are real numbers \cite{LDI}. LDI says that if the length of observation time interval over $\mathcal{M}$  is between $A$  and $B$, the durations of system sates over that interval should satisfy linear constraint  $ \sum_{i=1}^n k_i \cdot \int P_i \leq C$. Formal semantics of LDI is given in the definition 3.

Let  $TBEH$ be the set of all   $TBeh=(\rho_1,t_1)(\rho_2,t_2)...(\rho_m,t_m)$ of $\mathcal{M}$. Function $L:TBEH \rightarrow \mathbb{R^+}$ is defined as  $L(TBeh) = \sum_{j=1}^m t_j$. For each atomic proposition $P(\in AP)$ of $\mathcal{M}$, function $\int P:TBEH \rightarrow \mathbb{R^+}$  is defined as 
\[\begin{array}{llll}
\int P(TBeh) & = & \sum_{j=1}^m \left
\{\begin{array}{llll} t_j & \,\,\,\,\,\mbox{$P \in \overleftarrow{\rho_j}$}\\
0 & \,\,\,\,\,\mbox{otherwise}
\end{array}\right\}.
\end{array}\]
$\int P(TBeh)$ calculates total duration of $P$-states on $TBeh$, where $P$-state is the state in which $P$  is labeled. For an instance, if    $TBeh=(\rho_1, 3.1)(\rho_2, 2.0)(\rho_3, 1.5)$ and $P$  is labeled on the states $\overleftarrow{\rho_1}$  and $\overleftarrow{\rho_3}$, then $\int P(TBeh)=3.1+1.5=4.6$. Let  $\mathcal{D}$ be a linear duration invariant  over $\mathcal{M}$. Function $LF:TBEH \rightarrow \mathbb{R^+}$ is defined as $LF(TBeh)=\sum_{i=1}^n c_i \cdot \int P_i(TBeh)$. $LF$  is the function calculating the value of linear term  $\sum_{i=1}^n c_i \cdot \int P_i$  of $\mathcal{D}$  for each $TBeh$. Based on these definitions, the satisfaction relation between a real-time automaton $\mathcal{M}$ and a LDI $\mathcal{D}$ is defined as follows.

\begin{definition}
LDI $\mathcal{D}$ is satisfied by real-time automaton $\mathcal{M}$, denoted by $\mathcal{M} \models \mathcal{D}$, iff  $A \leq L(TBeh) \leq B$   implies $LF(TBeh) \leq C$ for all $TBeh$ of $\mathcal{M}$.
\end{definition}

\begin{example}
Fan is installed to protect self-ignition of accumulated gas leakage in gas burner. However, frequent gas leak may cause self-ignition as the ability of fan is limited. A desirable real-time requirement of gas burner is that the proportion of total gas leak time is not more than one twentieth of elapsed time, if the system is observed for more than one minute. This real-time requirement can be specified using LDI as follows.
\[\begin{array}{llll}
\ell \geq 60 \rightarrow 19 \cdot \int Leak - \int NLeak \leq 0
\end{array}\]
Here,  $19 \cdot \int Leak - \int NLeak \leq 0$ is derived from  $\int Leak \leq (1/20) \cdot \ell$ by substituting  $\ell=\int Leak + \int NLeak$.
\end{example}

\subsection{Approximate Technique Checking Real-time Automata for LDIs}

Mathematically, checking a real-time automaton $\mathcal{M}$  for a LDI $\mathcal{D}$  is to solve the following optimization problem.
\[\begin{array}{llll}
$Find out the maximum of LF over $ \{TBeh | A \leq L(TBeh) \leq B \}.
\end{array}\]

If the maximum value of the function $LF$  is smaller than or equal to $C$, then $\mathcal{D}$  is satisfied by real-time automaton $\mathcal{M}$. Unlike previous methods, we check $\mathcal{M}$  approximately using genetic algorithm without any complicated preprocessing and impractical computation. Genetic algorithm works especially well, when the fitness function is linear like LDI. In this subsection, we describe an approximate technique checking real-time automata for LDIs, which is based on the genetic algorithm. We assume that readers have elementary knowledge about technical procedures of genetic algorithm. Given a real-time automaton $\mathcal{M}$  and a LDI  $\mathcal{D}$. \\

$\emph{Encoding.}$

 A time-stamped transition  $(\rho, t)$, where $\rho=(s,[a,b],s')$  and  $t\in [a,b]$ is a $\emph{gene}$ and a time-stamped behavior $(\rho_1,t_1)(\rho_2,t_2)...(\rho_m,t_m)$  is a $\emph{chromosome}$ ($\emph{individual}$).  

$\emph{Fitness\  function.}$

 The linear function $LF$  defined in subsection 2.1 is used as the $\emph{fitness function}$.  $LF$ calculates the value of linear term $\sum_{i=1}^n c_i \cdot \int P_i$  of $\mathcal{D}$  for each individual   $(\rho_1,t_1)(\rho_2,t_2)...(\rho_m,t_m)$. 

$\emph{Initialization.}$ 

The set $BEH$  of all behaviors of $\mathcal{M}$ can be expressed as the union of regular expressions consisting of concatenation and Kleene closure on the alphabet $T$. For example, the set $BEH$ of the behaviors of gas burner can be expressed as $BEH=\rho_1(\rho_2\rho_1)^* \cup \rho_2(\rho_1\rho_2)^* \cup \rho_1(\rho_2\rho_1)^*\rho_2 \cup  \rho_2(\rho_1\rho_2)^*\rho_1$, where  $\rho_1=(s_1, [30, \infty], s_2)$  and $\rho_2=(s_2, [0,1], s_1)$ . Therefore, it's better to choose individuals uniformly from each component of union for quick and uniform expansion of search space, when we create initial population and generate new population. 

$\emph{Selection\ operation.}$ 

Elitist preserving selection which retains the best individuals in a generation unchanged in the next generation is used. 

$\emph{Mutation\ operation.}$ 

Mutation is realized by altering a gene $(\rho, t)$  with another gene $(\rho, t')$  where $\rho=(s,[a,b],s')$  and  $t,t'\in [a,b]$. Multi-point mutation can be used for relatively long individuals. Applications of mutation operation expand the breadth of search space. 

$\emph{Cut\ and\ splice\ operation.}$ 

Cut and splice produces two new individuals from two individuals having same gene, by swapping each suffix beyond the selected gene. \\

Genetic algorithm checking a real-time automaton for a LDI is composed as follows.\\

$\emph{Step 1:}$ 

Using initialization method described above, create initial population $P(0)$ consisting of $N$  individuals and satisfying $A \leq \ell(TBeh) \leq B$ for each individual.

$\emph{Step 2:}$ 

Evaluate fitness of each individual. If $LF(TBeh) > C$  for some individual $TBeh$, terminate the algorithm with output $\mathcal{M} \not\models  \mathcal{D}$. (Note that counter example $TBeh$  is used for debugging.)

$\emph{Step 3:}$ 

Generate population $Q$  by applying genetic operations to current population $P(n)$. Remove all individuals not satisfying    $A \leq \ell(TBeh) \leq B$ from $Q$ and add new individuals satisfying  $A \leq \ell(TBeh) \leq B$  as many as the number of removed individuals. 

$\emph{Step 4:}$  

Generate new population $P(n+1)$  from $Q$  by changing the least-fit individuals of $Q$  with the best-fit individuals of  $P(n)$. 

$\emph{Step 5:}$ 

Repeat step 2-4 until the best-fitness value is settled in the sequence of populations or $n$  is reached to the maximum. 

\subsection{Experiment and Remark}

We applied our genetic algorithm to check the real-time automaton of Example 1 against the LDI of Example 2. Encoding and fitness function were decided according to the above method. Initial population was created by choosing individuals from  $\rho_1(\rho_2\rho_1)^*,   \rho_2(\rho_1\rho_2)^*,  \rho_1(\rho_2\rho_1)^*\rho_2$ and   $\rho_2(\rho_1\rho_2)^*\rho_1$  randomly but uniformly. We executed our genetic algorithm 10 times by changing parameter $N$  between 80-100, $P_m$  between 0.1-0.3 and $P_d$  between 0.4-0.6. Here, $P_m$  is a probability of mutation and $P_d$  is a probability of cut and splice. Termination condition was  $n$=50. 
The best fitness was reached to -3 or nearly -3 in each execution. From this, we could estimate the maximum of $19\cdot \int Leak - \int NLeak $ is -3 which is much smaller than  $C$=0. Consequently, we could confidently conclude that real-time requirement  $ \ell \geq 60 \rightarrow 19\cdot \int Leak - \int NLeak \leq 0$ is satisfied by gas burner model of Example 1.

The approximate technique of this section neither require complicated preprocessing nor need impractical calculation. It also has the advantage of finding out counter examples violating requirement specification, which is achieved by applying algorithm repeatedly. In case that the maximum of $LF$  is different from $C$, it certainly demonstrates same effect with normal model checking. But it is needed more executions of algorithm to get enough information about the maximum of $LF$ in opposite case. Our technique does not largely depend on the increase of state number of system model as the fitness function is linear.

\section{Approximate technique checking probabilistic real-time automata for PLDIs}

To specify the dependability requirements of real-time systems, a kind of probabilistic extension of DC  has been introduced in [16, 17]. No rigorous syntax has been introduced in these papers, and the authors just focused on the development of techniques for reasoning instead of checking. In [18], authors introduced probabilistic duration calculus (abbreviated to PDC) which is a conservative extension of DC and defined its semantics using behavioral of [20]. They also considered the decidability of a class of PDC formulas, so-called probabilistic linear duration invariants (abbreviated to PLDIs), and presented a technique checking probabilistic timed automata against PLDIs. But the chekcing algorithm has too high complexity, as the authors noted in their paper.  

In this section, we present an approximate technique checking probabilistic real-time automata against PLDIs, which uses the technique presented in section 2. For the convenience of approximate model checking, we define the satisfaction relation between a probabilistic real-time automaton and a PLDI differently from [18] but equivalently in essence. 

\subsection{Satisfaction relation between a probabilistic real-time automaton and a PLDI}
A discrete probability distribution over a set $S$ is a mapping $p:S\rightarrow[0,1]$  such that the set $\{s|s\in S, p(s)>0\}$  is finite and $\sum_{s\in S}p(s)=1$. The set of all discrete probability distributions over $S$  is denoted by $S_{Dist}$.

\begin{definition} A probabilistic real-time automaton $\mathcal{M}$  is a triple $\mathcal{M}=(S,D,L)$  consisting of a finite set  $S$ of states,
a probabilistic transition relation  $D:S\rightarrow S_{Dist}\times I$, and a labeling function $L:S\rightarrow 2^{AP}$.
\end{definition}

Every state of a probabilistic real-time automaton is both initial state and accepting state. The discrete probability distribution corresponding to $s$  is denoted by  $p_s$. 

\begin{example}
Realistic gas burner has probabilistic characteristics because senor may fail to detect flame in some cases. From the dependability point of view, gas burner can be modeled as a probabilistic real-time automaton as follows. (See the right of Fig.~\ref{fig1})\\
\[\begin{array}{llll}
S=\{s_1, s_2\}\\
D(s_1)=p_{s_1}\times [30, \infty], p_{s_1}(s_1)=0.9, p_{s_1}(s_2)=0.1\\
D(s_2)=p_{s_2}\times [0,1], p_{s_2}(s_1)=0.8, p_{s_2}(s_2)=0.2\\
L(s_1)=NLeak, L(s_2)=Leak
\end{array}\]
\end{example}

$\rho=(s, p_s(s'), [a,b], s')$  is called a transition and $(\rho, t)$   is called a time-stamped transition, where $t\in [a,b]$.  For example, $\rho=(s_1, 0.1, [30, \infty], s_2)$ is a transition and $(\rho, 31)$ is a time-stamped transition of the probabilistic real-time automaton of Fig. 1. 
$ \rho_1\rho_2...\rho_m...$ is called an infinite behavior and denoted by $Beh$, if $\overrightarrow{\rho_i} = \overleftarrow {\rho_{i+1}}$ for all $i\geq 1$.  $ \rho_1\rho_2...\rho_m$  is called a finite behavior, if  $\overrightarrow{\rho_i} = \overleftarrow {\rho_{i+1}}$  for all  $i=1, ..., m-1$.  $(\rho_1,t_1)(\rho_2,t_2)...(\rho_m,t_m)...$ is called an infinite time-stamped behavior and denoted by $TBeh$, if  $ \rho_1\rho_2...\rho_m...$  is an infinite behavior and $(\rho_i, t_i)$  is a time-stamped transition for each  $i\geq 1$.  $(\rho_1,t_1)(\rho_2,t_2)...(\rho_m,t_m)$ is called a finite time-stamped behavior, if  $ \rho_1\rho_2...\rho_m$  is a finite behavior and $(\rho_i, t_i)$  is a time-stamped transition for each $i=1, ..., m$. 

$BEH(s)$  denotes the set of all infinite behaviors satisfying $\overleftarrow{\rho_1}=s$. $BEH(s)$  can be expressed as a tree structure, denoted by  $G_{BEH(s)}$. For example, The left of Fig.~\ref{fig2} shows the tree expression $G_{BEH(s_1)}$ of $BEH(s_1)$  of the probabilistic real-time automaton described in example 3 (The right of Fig. 1). $G_{BEH(s)}$  can be identified with $BEH(s)$  and we mainly use $G_{BEH(s)}$  for the convenience of description. The infinite behaviors of a probabilistic real-time automaton $\mathcal{M}=(S,D,L)$ can be completely expressed using $|S|$ distinct tree expressions.

\begin{figure}
\centering
\includegraphics[height=5.3cm]{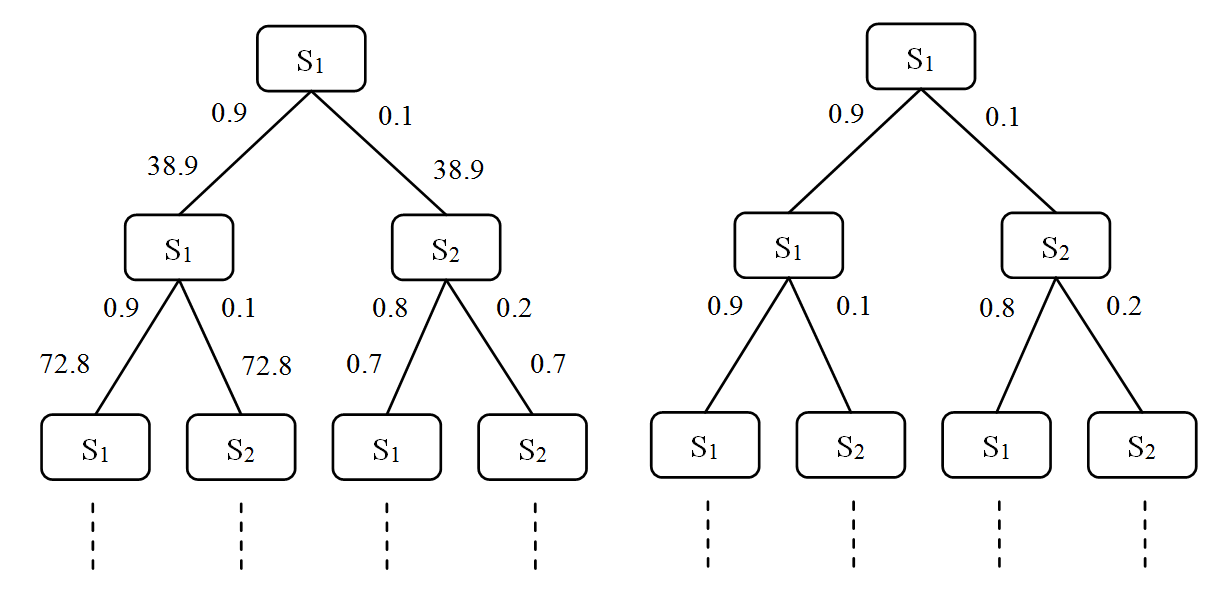}
\caption{Left: Tree expression $G_{TBEH(s_1)}$ of gas burner. Right: Calculation tree $T_{s_1}$ of gas burner.}
\label{fig2}
\end{figure}

Time-stamped instance of $G_{BEH(s)}$, that is, the one which is obtained by changing each time interval of  $G_{BEH(s)}$   with a time point of that interval, is denoted by $G_{TBEH(s)}$  or simply $TBEH(s)$. The probabilistic structure $(G_{TBEH(s)}^{'},$ $ F_{TBEH(s)}, P_{TBEH(s)})$  is defined on $G_{TBEH(s)}$ as follows.  

\begin{itemize}
\item $G_{TBEH(s)}^{'}$ is defined as the set of all infinite paths of  $G_{TBEH(s)}$, starting from the root  $s$.
\item  We denote the set of infinite paths of $G_{TBEH(s)}^{'}$, which have same prefix $(\rho_1,t_1)(\rho_2,t_2)...(\rho_m,t_m)$, denote by $\sigma_{(\rho_1,t_1)(\rho_2,t_2)...(\rho_m,t_m)}$. Here, we consider $s_1\stackrel{p_1, t_1}{\longrightarrow} s_2 \stackrel{p_2,t_2}{\longrightarrow} s_3 \stackrel{p_3,t_3}{\longrightarrow} s_4\ldots $ as $(\rho_1,t_1)(\rho_2,t_2)(\rho_3,t_3)\ldots$ where $\overleftarrow{\rho_i}=s_i$. $F_{TBEH(s)}$ is defined as the smallest $\sigma$-algebra generated by the set of all $\sigma_{(\rho_1,t_1)(\rho_2,t_2)...(\rho_m,t_m)}$.
\item  The probability measure $P_{TBEH(s)}$  on $F_{TBEH(s)}$ is the unique measure such that 
$P_{TBEH(s)}(\sigma_{(\rho_1,t_1)(\rho_2,t_2)...(\rho_m,t_m)})=p_1 \times p_2 \times  \ldots \times p_{m-1}$
\end{itemize}

Note that the probabilistic structure $(G_{TBEH(s)}^{'},$ $ F_{TBEH(s)}, P_{TBEH(s)})$  does not depend on the time stamps. To study temporal and probabilistic behaviors of a probabilistic real-time automaton, we resolve $G_{BEH(s)}$  into time-stamped instances and then study probabilistic behavior of each instance. 

\begin{definition}
 A PDC formula of the form $[D]_{\sqsupseteq \lambda}$  is called a probabilistic linear duration invariant, shortly PLDI, where $\mathcal{D}$  is a linear duration invariant of DC and  $\lambda \in [0,1]$.  
\end{definition}

PLDI means that the possibility that the real-time requirement $\mathcal{D}$  is satisfied by the system is equal to or greater than $\lambda$, even if system runs in the worst case [18]. Formal definition of the semantics of PLDI is as follows.  

\begin{definition}
Let $\mathcal{M}=(S,D,L)$  be a probabilistic real-time automaton and $[\mathcal{D}]_{\sqsupseteq \lambda}$ be a PLDI. 
\begin{itemize}
\item $\mathcal{D}$ is satisfied by $TBeh=(\rho_1,t_1)(\rho_2,t_2)...(\rho_m,t_m)...$, denoted by $TBeh\models \mathcal{D}$, iff $\mathcal{D}$ is satisfied by all finite sub-behaviors $(\rho_i,t_i)(\rho_{i+1},t_{i+1})...(\rho_j,t_j)$ of $TBeh$. Satisfaction relation between $(\rho_i,t_i)(\rho_{i+1},t_{i+1})...(\rho_j,t_j)$ and $\mathcal{D}$ was defined in Section 2.  

\item $[\mathcal{D}]_{\sqsupseteq \lambda}$ is satisfied by $G_{TBEH(s)}$, denoted by $G_{TBEH(s)}\models [\mathcal{D}]_{\sqsupseteq \lambda}$, iff the probability of the set of paths of $G_{TBEH(s)}^{'}$, which satisfy $TBeh\models \mathcal{D}$, is greater than or equal to $\lambda$.

\item $[\mathcal{D}]_{\sqsupseteq \lambda}$ is satisfied by $G_{BEH(s)}$, denoted by $G_{BEH(s)}  \models [\mathcal{D}]_{\sqsupseteq \lambda}$, iff $G_{TBEH(s)}\models [\mathcal{D}]_{\sqsupseteq \lambda}$ for every time-stamped instance  $G_{TBEH(s)}$ of $G_{BEH(s)}$. 

\item  $[\mathcal{D}]_{\sqsupseteq \lambda}$ is satisfied by $\mathcal{M}$, denoted by  $\mathcal{M}  \models [\mathcal{D}]_{\sqsupseteq \lambda}$, iff $G_{BEH(s)}  \models [\mathcal{D}]_{\sqsupseteq \lambda}$ for all $s\in S$.  
\end{itemize}
\end{definition}

\subsection{Approximate Technique Checking Probabilistic real-time automata for PLDIs}
To decide $\mathcal{M}  \models [\mathcal{D}]_{\sqsupseteq \lambda}$  approximately, we introduce the notion of probability calculation tree of a probabilistic rea-time automaton. 

\begin{definition}
Let $\mathcal{M}=(S,D,L)$  be a probabilistic real-time automaton and $s$ be a state of $\mathcal{M}$. The tree constructed according to the following rule is called the probability calculation tree with root $s$ and denoted by $T_s$. 
\begin{itemize}
\item Root is labeled with $s$.

\item Let $v$ be an already constructed vertex with label $s'(\in S)$.

\begin{itemize}
\item For each $s''$ satisfying $p_{s^{'}}(s'')>0$, add new vertex with label $s''$ as a child of $v$. 
\item Label $p_{s'}(s'')$ on the edge connecting the vertex with label $s'$  and the vertex with label $s''$.

\end{itemize}
\end{itemize}
\end{definition}

$|S|$  probability calculation trees are defined for a probabilistic real-time automaton  $\mathcal{M}=(S,D,L)$. Two subtrees having same label are isomorphic each other in a probability calculation tree. And each subtree of $T_s$, whose root is labeled with  $s'$, is isomorphic to the probability calculation tree  $T_{s'}$. 

The right of Fig. 2 shows a probability calculation tree defined from the probabilistic real-time automaton of example 3. As we can see in the figure, the probability calculation tree  $T_{s}$  is obtained by removing time stamps from a $G_{TBEH(s)}$ . That is, the same probability calculation tree  $T_{s}$ is generated from every $G_{TBEH(s)}$. Probabilistic structure is defined on $T_{s}$  in the same way with subsection 3.1. $T_{s}$  is used to check  $G_{BEH(s)}  \models [D]_{\sqsupseteq \lambda}$.

For the simplicity of description, we identify each vertex of a probability calculation tree with its label. Given a finite set $W$ of finite paths of  $T_s$. For a vertex   $v$ of $T_s$, $P_W(v)$  denotes the probability of the set of all infinite paths which start from $v$  and do not include any path in  $W$. 

\begin{theorem}
Given a probability calculation tree $T_s$  and finite set $W$  of finite paths of $T_s$. For each vertex $v$  of $T_s$, $P_W(v)$  is computable.
\end{theorem}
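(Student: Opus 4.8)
The plan is to turn the computation of $P_W(v)$ into a finite combinatorial recursion on $T_s$, exploiting that $W$ is finite -- so it can constrain only a bounded-depth part of the tree -- together with the self-similarity already noted, namely that the subtree rooted at a vertex labelled $s'$ is a copy of $T_{s'}$ and carries the probability structure of Subsection 3.1.

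The first and main step is a reduction. For a finite (descending) path $w$ of $T_s$ write $\mathrm{top}(w)$ and $\mathrm{end}(w)$ for its first and last vertices, and write $T_v$ for the subtree rooted at $v$. Since the vertices of a tree occupy unique positions and the vertices of $w$ preceding $\mathrm{end}(w)$ are exactly the ancestors of $\mathrm{end}(w)$ at the matching depths, one checks that an infinite path $\pi$ starting at $v$ includes $w$ if and only if every vertex of $w$ lies in $T_v$ and $\pi$ passes through $\mathrm{end}(w)$: if $\pi$ hits $\mathrm{end}(w)$ and $\mathrm{top}(w)\in T_v$, then the unique segment of $\pi$ from $v$ down to $\mathrm{end}(w)$ traverses all of $w$, whereas if $\mathrm{top}(w)\notin T_v$ then $\pi$, confined to $T_v$, never reaches $\mathrm{top}(w)$. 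Hence $P_W(v)$ equals the probability that a random descending path from $v$ misses every vertex of the finite set $E_v:=\{\mathrm{end}(w):w\in W,\ \mathrm{top}(w)\in T_v\}\subseteq T_v$. So it suffices to compute, for any vertex $v$ and any finite set $E$ of vertices of $T_v$, the number $\mathrm{av}(v,E)$ defined as the probability that the random descending path from $v$ avoids $E$.

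The second step is a recursion for $\mathrm{av}(v,E)$ on the children of $v$. If $v\in E$ then $\mathrm{av}(v,E)=0$; if $E=\emptyset$ then $\mathrm{av}(v,E)=1$; otherwise, writing $\mathrm{lab}(\cdot)$ for the state labelling a vertex and letting $v'$ range over the children of $v$,
\[ \mathrm{av}(v,E)=\sum_{v'}p_{\mathrm{lab}(v)}(\mathrm{lab}(v'))\,\mathrm{av}(v',E\cap T_{v'})=1-\sum_{v':\,E\cap T_{v'}\neq\emptyset}p_{\mathrm{lab}(v)}(\mathrm{lab}(v'))\bigl(1-\mathrm{av}(v',E\cap T_{v'})\bigr), \]
where the second form uses $\sum_{v'}p_{\mathrm{lab}(v)}(\mathrm{lab}(v'))=1$; both are finite expressions because $v$ has finitely many children and $E\cap T_{v'}\neq\emptyset$ for at most $|E|$ of them, so only finitely many recursive calls are non-trivial. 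The recursion terminates since the potential $\Phi(v,E):=\max\{\mathrm{depth}(u)-\mathrm{depth}(v):u\in E\}$ drops by at least $1$ on every branch with $E\cap T_{v'}\neq\emptyset$, and the base cases $\Phi=0$ (which forces $E=\{v\}$) and $E=\emptyset$ are handled by the first two clauses. Unwinding, $P_W(v)=\mathrm{av}(v,E_v)$ comes out as a polynomial with rational coefficients in the finitely many transition probabilities $p_{s'}(s'')$; as those are given rational numbers, the recursion outputs $P_W(v)$ as an exact rational, so in particular it is computable.

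The crux I expect is the reduction in the second paragraph: once ``$\pi$ contains a forbidden path of $W$'' has been recognised as ``$\pi$ visits one of finitely many forbidden vertices'', no auxiliary pattern-matching automaton and no infinite linear system are needed, and the finiteness of $W$ is exactly what keeps the otherwise infinite tree tractable; getting this bookkeeping right (in particular the role of $\mathrm{top}(w)$ and the case where $v$ lies strictly inside $w$) is the one delicate point. If instead ``include'' is intended as ``has as a prefix'', the scheme degenerates to $P_W(v)=1-\sum P(\sigma_w)$ taken over the $\subseteq$-minimal $w\in W$ with $\mathrm{top}(w)=v$, a finite sum of cylinder probabilities of the form $p_1\cdots p_{|w|-1}$, and the conclusion is unchanged.
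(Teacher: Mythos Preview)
Your argument is correct under the literal reading of the statement --- each $w\in W$ a fixed finite path anchored at specific vertices of $T_s$ --- and it is tidier than the paper's: you reduce ``$\pi$ contains some $w\in W$'' to ``$\pi$ hits one of the finitely many endpoints in $E_v$'', and your depth-potential recursion terminates and returns $P_W(v)$ as an explicit rational, with no linear algebra needed. The paper proceeds differently: for each $v$ it writes $P_W(v)$ as a linear combination of the values $P_W(v_j)$ at ``escape'' vertices (children of non-end vertices of the forbidden paths that themselves lie off those paths), or at all children of $v$ when no forbidden path starts at $v$, and then asserts that only finitely many distinct such equations arise, giving a finite linear system to be solved. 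The worked example in Subsection~3.3 shows that the paper is silently identifying vertices with their state labels and treating the members of $W$ as \emph{label patterns} that may recur at any depth --- that is why it obtains a $2\times2$ homogeneous system with $P_{W'}(s_1)=P_{W'}(s_2)=0$, an outcome your terminating recursion can never produce. Under that intended reading your key reduction breaks: a single pattern $w$ has infinitely many realisations in $T_s$, so $E_v$ is infinite and $\Phi$ no longer bounds the recursion. Thus your proof establishes the theorem as literally stated, but it does not match the paper's method or the interpretation the paper actually uses downstream; to align with it you would replace the depth recursion by a finite linear system indexed by states.
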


\begin{proof}
Let $w_1, w_2, ..., w_m$  be the paths of $W$, which start from $v$. And let $V={v_1, v_2, ..., v_k}$  be the set which is obtained by eliminating vertices of  $w_1, w_2, ..., w_m$  from the set of all children of all non-end vertices of  $w_1, w_2, ..., w_m$. Then, the following equation holds.
\[\begin{array}{llll}
P_W(v)=p(v,v_1)\times P_W(v_1) + p(v,v_2)\times P_W(v_2) + ... + p(v,v_k)\times P_W(v_k) 
\end{array}\]
Here $p(v,v_i)$  is the multiplication of probability values labeled on the path from $v$  to $v_i$. In case that there is no path starting from $v$  in  $W$, the following equation holds, where $v_1, v_2, ..., v_l$ are children of $v$. 
\[\begin{array}{llll}
P_W(v)=p_v(v_1)\times P_W(v_1) + p_v(v_2)\times P_W(v_2) + ... + p_v(v_k)\times P_W(v_k) 
\end{array}\]
There are only finite different vertices in $T_s$ and we can build linear equation system consisting of such equations described above. Solving it, we can find the value of $P_W(v)$.
\end{proof}

Using Theorem 1, it is possible to decide approximately whether $[\mathcal{D}]_{\sqsupseteq \lambda}$  is satisfied by $\mathcal{M}$  or not. In the rest of this subsection, we describe about it.  
We get real-time automaton $\mathcal{M'}$  from probabilistic real-time automaton $\mathcal{M}$  by removing transition probability values on all edges. We check $\mathcal{M'} \models \mathcal{D}$ using approximate model checking technique of section 2.2. If repeated checking does not detect any finite time-stamped behaviors violating  $\mathcal{D}$   in  $\mathcal{M'}$, we can conclude $\mathcal{M} \models [\mathcal{D}]_{\sqsupseteq \lambda}$. 

Let us now assume that repeated approximate model checking have detected some time-stamped behaviors violating $\mathcal{D}$ in $\mathcal{M'}$ . We get finite time-stamped behaviors of $\mathcal{M}$  by labeling probability values again to all detected time-stamped behaviors of $\mathcal{M'}$. We denote this set by  $W_0$. 

For each $G_{TBEH(s)}$, probability value of the set of behaviors, which start from $s$  and do not include finite time-stamped behaviors of   $W_0$ as part, is different. What we are interested is the minimum of these probability values. If the minimum is equal to or greater than  $\lambda$, we can conclude $\mathcal{M} \models [\mathcal{D}]_{\sqsupseteq \lambda}$, that is, the possibility that $\mathcal{D}$  is satisfied by $\mathcal{M}$ is approximately greater than or equal to $\lambda$  even in the worst case. 

We generate $W_0^{'}$  from $W_0$  by expressing each finite time-stamped behavior of $W_0$  in the form of path and removing time stamps. Each path of $W_0^{'}$  becomes a finite path of a probability calculation tree of  $\mathcal{M}$. The number of elements of  $W_0^{'}$  is smaller than the one of $W_0$  because different time stamps can have same transition probability value. 

Finally, we generate $W=\{ w_1, w_2, ..., w_m \}$  from  $W_0^{'}$  by eliminating each path which include another path. This is because for each calculation tree  $T_s$, the set of infinite paths including $w$  as a subpath is a subset of the set of infinite paths including $w'$ as a subpath, if $w$  includes $w'$  as a subpath. (Note that it is possible to reduce $W$  again for the calculation of $P_W(s)$ in some cases. We don't consider about it in this paper and show an example in the next subsection.)

By applying Theorem 1, we calculate $P_W(s_1), P_W(s_2), ..., P_W(s_n)$   for each state  $s_1, s_2, ..., s_n$ of $\mathcal{M}$. If $P_W(s_i) \geq \lambda$  for all $i(1\leq i \leq n)$, we can conclude that  $\mathcal{M} \models [\mathcal{D}]_{\sqsupseteq \lambda}$  holds approximately. But if $P_W(s_i) < \lambda$   for some  $i(1\leq i \leq n)$, it means that   $\mathcal{M} \not\models [D]_{\sqsupseteq \lambda}$. The technique presented in this subsection can be fully automated. 

\subsection{Experiment and Remark}
Using the technique described above, we decided the satisfaction relation between the probabilistic real-time automaton $\mathcal{M}$  of Example 3 and the probabilistic linear duration invariant  $[\mathcal{D}]_{\sqsupseteq \lambda}$, where $\mathcal{D}$  is  $\ell \geq 60 \rightarrow 19 \cdot \int Leak - \int NLeak \leq 0$  and $\lambda$  is a real number satisfying  $0\leq \lambda \leq 1$. 

For the convenience of consideration, we bounded checking to the time-stamped behaviors whose lengths are not bigger than 8. As a result of 5 repeated application of approximate DC model checking to $\mathcal{M'}$  for $\mathcal{D}$, hundreds of time-stamped behaviors violating  $\mathcal{D}$ were detected. We constructed $W_0$  and $W_0^{'}$  according to the method described above. The number of paths of  $W_0^{'}$ was about 70. 

Finally, we generated $W$  from $W_0^{'}$ according to the method described above, which consists of 4 paths. They are
\[\begin{array}{llll}
s_2\stackrel{0.2}{\longrightarrow}s_2\stackrel{0.2}{\longrightarrow}s_2\stackrel{0.8}{\longrightarrow}s_1\stackrel{0.9}{\longrightarrow}s_1\\
s_2\stackrel{0.2}{\longrightarrow}s_2\stackrel{0.2}{\longrightarrow}s_2\stackrel{0.8}{\longrightarrow}s_1\stackrel{0.1}{\longrightarrow}s_2\\
s_1\stackrel{0.1}{\longrightarrow}s_2\stackrel{0.2}{\longrightarrow}s_2\stackrel{0.2}{\longrightarrow}s_2\stackrel{0.8}{\longrightarrow}s_1\\
s_1\stackrel{0.1}{\longrightarrow}s_2\stackrel{0.2}{\longrightarrow}s_2\stackrel{0.2}{\longrightarrow}s_2\stackrel{0.2}{\longrightarrow}s_2
\end{array}\]

\begin{figure}
\centering
\includegraphics[height=4.6cm]{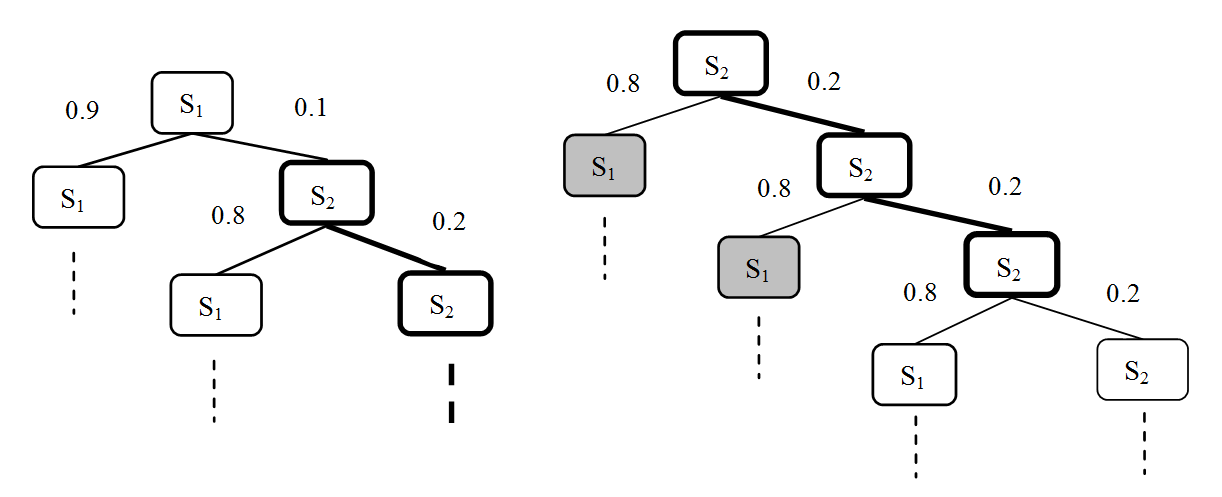}
\caption{Left: Calculation of $P_{W'}(s_1)$. Right: Calculation of  $P_{W'}(s_2)$.}
\label{fig3}
\end{figure}

We didn't apply the Theorem 1 to $W$  in this stage and reduced $W$ again manually in the following way. 4 paths include $s_2\stackrel{0.2}{\longrightarrow}s_2\stackrel{0.2}{\longrightarrow}s_2$  as a subpath.  Therefore, in each probability calculation tree of $\mathcal{M}$, the set of infinite paths including a path of $W$  as a subpath is a subset of the set of infinite paths including  $s_2\stackrel{0.2}{\longrightarrow}s_2\stackrel{0.2}{\longrightarrow}s_2$ as a subpath . 

On the other hand, every infinite behavior of a  $G_{BEH}$, which passes  $s_2$  three consecutive times, has a time-stamped instance violating  $\mathcal{D}$. These mean that we can find an approximate minimum value of the possibility for the satisfaction of $\mathcal{D}$  by $\mathcal{M}$  corresponding to the worst case, by calculating $P_{W'}(s_1)$  and $P_{W'}(s_2)$   where  $W'=\{s_2\stackrel{0.2}{\longrightarrow}s_2\stackrel{0.2}{\longrightarrow}s_2\}$. We applied theorem 1 to this  $W'$. 

As we can see in the left side of Fig.~\ref{fig3},  $s_2\stackrel{0.2}{\longrightarrow}s_2\stackrel{0.2}{\longrightarrow}s_2$  does not start from  $s_1$ of $T_{s_1}$. Therefore, the following equation holds.
\begin{equation}
P_{W'}(s_1) = 0.9 \cdot P_{W'}(s_1) + 0.1 \cdot P_{W'}(s_2)
\end{equation}
In $T_{s_2}$, the right side of Fig.~\ref{fig3},  $V$ consists of two grey-colored  $s_1$-vertices which are children of first two bold lined  $s_2$-vertices. Therefore, the following equation holds.
\begin{equation}
P_{W'}(s_2) = 0.8 \cdot P_{W'}(s_1) + 0.2\cdot 0.8 \cdot P_{W'}(s_1)
\end{equation}
 By combining (1) and (2), we set up the following linear equation system. 
\[\begin{array}{llll}
\left
\{\begin{array}{llll}
P_{W'}(s_1) = 0.9 \cdot P_{W'}(s_1) + 0.1 \cdot P_{W'}(s_2)\\
P_{W'}(s_2) = 0.96 \cdot P_{W'}(s_1)
\end{array}\right.
\end{array}\]
Solving this linear equation system, we have known  $P_{W'}(s_1)=P_{W'}(s_2)=0$. This means that $\mathcal{M}\not\models [\mathcal{D}]_{\sqsupseteq \lambda}$  for any  $\lambda (\in (0,1])$. In other words, the dependability of gas burner for the real-time requirement $\mathcal{D}$  is zero in worst case. 

We tried to make $W$  small as possible, because it can reduce total calculation time considerably. In general, it is needed careful analysis about the system model and requirement specification to minimize $W$.  It can be skipped if $W$  is small. The linear equation system  was homogeneous in the above example. However, it is not homogeneous in general. 

\section{Future Work}
There are no big technical difficulties in adjusting the techniques described in the paper to the timed automata [19] and probabilistic timed automata. For the next stage,  we want to develop approximate technique checking timed automata against undecidable DC formulas containing chop operator which is not considered in normal DC model checking. Discrete measurement operator $\Sigma$ (which is sometimes denoted by $\sharp$) of WDC [21] will be used to represent chop formulas quantitatively to be more convenient for checking. For example, a design requirement $\lceil Leak \rceil ^ \smallfrown \lceil NLeak \rceil ^ \smallfrown \lceil Leak \rceil \rightarrow \ell \geq 30$  of gas burner can be represented as $\Sigma Leak =2 \land \Sigma NLeak=1 \rightarrow \ell \geq 30$. The latter is much more convenient to apply optimization method.

\end{document}